\documentclass[11pt]{article}
\usepackage{fullpage}
\usepackage{amsfonts}
\usepackage{algorithmic}
\usepackage[boxed]{algorithm}
\usepackage{amsmath,amsthm,amssymb}
\usepackage{latexsym}
\usepackage{epic}
\usepackage{epsfig}
\usepackage{amscd}
\usepackage{url}
\usepackage{verbatim}
\usepackage{setspace}

\newtheorem{theorem}{Theorem}[section]

\newtheorem{lemma}[theorem]{Lemma}
\newtheorem{proposition}[theorem]{Proposition}

\newtheorem{definition}[theorem]{Definition}
\newtheorem{remark}[theorem]{Remark}

\newcommand{\sm}{\setminus}



\def\ex{\qopname\relax n{E}}
\def\min{\qopname\relax n{min}}
\def\max{\qopname\relax n{max}}


\newcommand{\expect}[2][]{\ex_{#1} [#2]}

\def\L{\mathcal{L}}

\def\sse{\subseteq}

\newcommand{\eat}[1]{}






\newenvironment{lp*}{\begin{equation*}  \begin{array}{lll}}{\end{array}\end{equation*}}


\newcommand{\alloc}{X}
\newcommand{\allocs}{{\mathbf \alloc}}

\newcommand{\alloci}[1][i]{{\alloc_{#1}}}
\newcommand{\price}{P}
\newcommand{\prices}{{\mathbf \price}}
\newcommand{\pricei}[1][i]{{\price_{#1}}}
\newif\ifshortver
\shortverfalse

\newcommand{\ShortLong}[2]{\ifshortver{#1}\else{#2}\fi}

\title{Combinatorial Auctions with Restricted Complements}

\author{Ittai Abraham\\
Microsoft Research, Silicon Valley\\
{\tt ittaia@microsoft.com} \and
Moshe Babaioff\\
Microsoft Research, Silicon Valley\\
{\tt moshe@microsoft.com} \vspace{0.3cm} \and
Shaddin Dughmi\\
Microsoft Research, Redmond\\
{\tt shaddin@microsoft.com} \and
Tim Roughgarden\\
Stanford University\\
{\tt tim@theory.stanford.edu} }



\begin{document}
\maketitle

\begin{abstract}
Complements between goods --- where one good takes on added value in
the presence of another --- have been a thorn in the side of algorithmic
mechanism designers.
On the one hand, complements are common in the
standard motivating applications for combinatorial auctions, like
spectrum license auctions.  On the other, welfare maximization in
the presence of complements is notoriously difficult, and this
intractability has stymied theoretical progress in the area.
For example, there are no known positive results for combinatorial
auctions in which bidder valuations are multi-parameter and
non-complement-free, other than the relatively weak results known for
general valuations.


To make inroads on the problem of combinatorial auction design in the
presence of complements, we propose a  model for valuations with
complements that is parameterized by the ``size'' of the complements.
The model permits a succinct representation, a variety of
computationally efficient queries, and non-trivial
welfare-maximization algorithms and mechanisms.
Specifically, 
a {\em hypergraph-$r$ valuation~$v$} for a good
set~$M$ is represented by a hypergraph~$H=(M,E)$,
where every (hyper-)edge $e \in E$ contains at most~$r$ vertices and has a
nonnegative weight~$w_e$.
Each good $j \in M$ also has a nonnegative weight~$w_j$.
The value~$v(S)$ for a subset~$S \sse M$ of goods is defined as the
sum of the weights of the goods and edges entirely contained in~$S$.

We design the following polynomial-time approximation algorithms and
truthful mechanisms for welfare maximization with bidders with
hypergraph valuations.
\begin{enumerate}
\item
For bidders whose valuations
 correspond to  subgraphs of a known graph that is
 planar (or more generally, excludes a fixed minor),
we give a  truthful and $(1+\epsilon)$-approximate mechanism.

\item
We give a polynomial-time, $r$-approximation algorithm for welfare
maximization with hypergraph-$r$ valuations.  Our algorithm randomly
rounds a compact linear programming relaxation of the problem.

\item We design a different approximation algorithm and use it
to give a polynomial-time, truthful-in-expectation mechanism
that has an approximation factor of~$O(\log^r m)$.

\end{enumerate}
\end{abstract}







\newpage

\section{Introduction}

In a {\em combinatorial auction}, a set~$M$ of $m$ heterogeneous
and indivisible goods are allocated to~$n$ bidders with private
preferences.  We model the preferences of a bidder~$i$ via a {\em
  valuation}~$v_i$, a function from~$2^M$ to~$\mathbb{R}^+$ with
$v_i(\emptyset) = 0$ and $v_i(S)\le v_i(T)$ for any $S\sse T\sse M$.
The number $v_i(S)$ indicates bidder $i$'s ``maximum willingness to
pay'' for the set of goods~$S\sse M$.
Several real-world scenarios can be modeled as combinatorial auctions,
most famously FCC auctions for spectrum; see the
books~\cite{CSS06,milgrom} for many more details and examples.
One fundamental optimization problem in combinatorial auctions is {\em
welfare maximization}, which is the problem of allocating the goods to
the bidders to maximize $\sum_{i=1}^n v_i(S_i)$, where~$S_i$ denotes
the goods allocated to~$i$.

The difficulty of welfare maximization, and of combinatorial auction
design more generally, depends on the structure of bidders' valuations.
A key issue is whether valuations exhibit {\em complements}, where one
good (e.g., a left shoe) has additional value in conjunction with
another good (a right shoe).  Formally, a {\em complement-free} (or
{\em subadditive}) valuation~$v$ satisfies $v(S) + v(T) \ge v(S \cup
T)$ for every pair $S,T \sse M$ of bundles.  Most previous work
in theoretical computer science on welfare maximization for
combinatorial auctions has focused on complement-free valuations and
several of its interesting subclasses (see Section~\ref{ss:related}
for details).
A simple valuation~$v$ with complements is a {\em single-minded} one,
where there is a bundle $S \sse M$ of at least two goods and a
number~$c > 0$ such that $v(T)$ equals $c$ if $T \supseteq S$ and 0
otherwise.

Complements are prevalent in the motivating applications for
combinatorial auctions.
For example, in the FCC spectrum action there are natural synergies
between nearby licenses.
However, welfare maximization in
the presence of complements is notoriously difficult, and this
intractability has stymied theoretical progress in the area.
For example, there are no known positive results for combinatorial
auctions in which bidder valuations are multi-parameter and
non-complement-free, other than the relatively weak results known for
general valuations.
Even with mere
single-minded valuations, there is no polynomial-time approximation
algorithm for the welfare maximization problem with sub-polynomial
approximation ratio (assuming $P \neq NP$)~\cite{LOS,Sandholm02}.

The combined importance and intractability of valuations with general
complements motivates models for valuations with complements that are
restricted in some way.  Ideally, such a model should be parameterized
by the ``size'' of the complements, and should permit
a succinct representation, a variety of computationally efficient
queries, and non-trivial welfare-maximization algorithms.
In this paper we study such a model.

We study welfare maximization algorithms and truthful combinatorial
auctions for a natural (hyper)graphical valuation model that
concisely expresses limited complements.
Precisely, a {\em hypergraph-$r$ valuation~$v$} for a good
set~$M$ can, by definition, be represented by a hypergraph~$H=(M,E)$,
where every edge $e \in E$ contains at most~$r$ vertices and has a
nonnegative weight~$w_e$.
Each good $j \in M$ also has a nonnegative weight~$w_j$.\footnote{No
  interesting approximation results are possible if weights are
  allowed to be negative (assuming $P \neq NP$).}
The value~$v(S)$ for a subset~$S \sse M$ of goods is defined as the
sum of the weights of the goods and edges entirely contained in~$S$:
\begin{equation}\label{eq:def}
v(S) = \sum_{j \in S} w_j + \sum_{e \,:\, e \sse S} w_e.
\end{equation}
The parameter
$r$ is called the {\em rank} of the valuation $v$.
The rank of a set of valuations is the maximal rank of any valuation in the set.

For example, consider a {\em graph} (i.e., hypergraph-$2$) valuation.
A bidder with such a valuation has a base value for each good, and
also enjoys bonuses if it acquires certain  pairs of complementary
goods.
Even in this special case, each bidder possesses a polynomial number of
different private parameters.
Single-minded bidder valuations are essentially a single-parameter
special case of hypergraph valuations --- they correspond to the
hypergraph valuations with a single edge with non-zero weight.
A hypergraph-$r$ valuation can be described via $O(m^r)$ parameters,
which is polynomial in the number of goods when~$r$ is a fixed constant.

Summarizing, complements are an important, unavoidable aspect of
combinatorial auctions, and they have been underrepresented in
algorithmic mechanism design.  Hypergraph valuations are a natural
combinatorial representation of valuations that expresses complements
in a controllable way.
Even the most basic questions about welfare maximization
with bidders with hypergraph valuations
were completely open before the present work.
\begin{enumerate}

\item What is the best-possible polynomial-time approximation ratio
  for welfare maximization (assuming $P \neq NP$), as a function of
  the valuations' ranks?

\item Can equally good approximation ratios be achieved by
  incentive-compatible 
  auctions?

\item Does the combinatorial structure of the valuations permit novel
  combinatorial auction designs, for example by applying
graph-theoretic tools?

\end{enumerate}

\subsection{Our Results}

We contribute to all three of the questions above.  Our main results are
following.
\begin{enumerate}

\item
We use tools from graph theory to give a
 truthful and $(1+\epsilon)$-approximate
 mechanism
 for bidders with valuations that correspond to
 subgraphs of a known graph that is planar (or more generally, excludes a fixed minor).
Such valuations plausibly model some interesting combinatorial
 auctions, such as those where the predominant synergies involve
 ``neighboring'' goods (e.g., spectrum licenses).

 Our technical approach here is to use Baker-type graph decompositions
 to design
 mechanisms that reduce the given welfare maximization
 problem to one with valuations supported by a graph with bounded
 treewidth, which we solve optimally in polynomial time using dynamic
 programming and the VCG mechanism.

\item
We give a polynomial-time $r$-approximation algorithm for welfare
maximization with
general hypergraph-$r$ valuations.  This guarantee is nearly the best
possible, assuming $P \neq NP$.
Our algorithm uses a correlated randomized rounding of 
a compact linear programming relaxation of the problem.

\item
With rich multi-parameter valuations like hypergraph valuations, only
highly restricted types of approximation algorithms lead to truthful
mechanisms; in particular, the $r$-approximation algorithm above does not.
We instead design another algorithm, which
randomly rounds the exponentially-sized configuration linear
programming relaxation, and prove that it is
``maximal-in-distributional-range (MIDR)'' and hence induces a
truthful-in-expectation mechanism.  Our mechanism runs in polynomial
time and has an approximation factor of $O(\log^r m)$ to the social
welfare, for general
hypergraph-$r$ valuations.  For small~$r$, this guarantee is
much better than the
$O(\sqrt{m})$ approximation factor achieved by combinatorial auctions
for general valuations, the only previous guarantee known for
multi-parameter non-complement-free valuations.

\end{enumerate}

\subsection{Related Work}
\label{ss:related}

\eat{
these are the points by Tim, should be removed when we are done (Tim?).
\begin{itemize}
\item
some general combinatorial auction references.  LOS paper.
 econ (milgrom?) papers that try address complements, though not
 with approximation.

\item
previous work on the model:
papers on hypergraph valuations: conitzer et al, plus one other ai
paper.  discuss what they did, no approximation;
also: what bidding language (OR, etc.) does it most closely
  correspond to? (check nisan, sandholm surveys)

\end{itemize}
} 
Combinatorial auctions
are
a paradigmatic problem in Algorithmic Mechanism Design~\cite{NR01},
as they are representative of problems in which
incentives and computational constraints clash.
For background on truthful approximation
mechanisms for combinatorial auctions we refer the reader
to~\cite{CSS06,BN07}. As the literature on the subject is very large
we only survey papers that are closely related to ours.

The most basic form of a valuation that exhibits complements is that of
single minded bidders, in which each bidder desires one specific
bundle.
Lehmann, O'Callaghan and Shoham~\cite{LOS} have presented a truthful
mechanism that is an $O(\sqrt{m})$-approximation when bidders are
single minded. It has also been shown~\cite{Sandholm02,LOS} that this
is essentially the best possible (unless $P=NP$), even without
incentive constraints.
Lavi and Swamy~\cite{LS05} have presented a randomized mechanism that
is truthful-in-expectation, and achieves $O(\sqrt{m})$-approximation
for general valuations.
While this approximation is essentially the best possible for general
valuations, it is obviously not attractive when $m$ is large.
Many previous works have sought improved approximations for restricted valuations; see~\cite{BN07} for a survey.  These have focused on subadditive valuations, and subclasses of them like submodular and ``XOS'' valuations.
As far as we know, this paper is the first to improve 
the $O(\sqrt{m})$-approximation results for valuations that exhibit
any form of
complements.

Conitzer, Sandholm and Santi~\cite{CSS05} have considered ''$k$-wise
Dependent Valuations'' which are similar to hypergraph-$k$ valuations
(except they also allow the edges' weights to be negative) and have
shown that the problem of finding the efficient allocation is NP-hard
even for graphs ($k=2$), and have also studied the problem of
eliciting such valuations.
Chevaleyre et al.~\cite{CEEM08} consider the same class of valuations
(which they call ``$k$-additive'') and prove similar computational
hardness results, and also study negotiation protocols between the
agents.
Unlike these papers our focus is on approximating the welfare-maximizing
allocation and designing truthful mechanisms with good
approximations.


Each of our technical results includes some ingredients that were
developed in earlier works.  Our $r$-approximation algorithm for
welfare maximization (Theorem~\ref{t:wm}) is based on a sequence of
randomized rounding algorithms devised for the multiway cut
problem~\cite{CKR00}, the metric labeling problem~\cite{KT02},
graph homomorphism problems~\cite{LRS06}, and other
decomposition problems~\cite{KR}.  Our $O(\log^r m)$-approximate
truthful-in-expectation combinatorial auction uses a number of ideas.
The configuration linear programming relaxation, and the fact that
its tractability reduces to that of demand queries, is folklore (see
e.g.~\cite{BN07}).  The modified version we use with ``proxy bidders''
is based on an approach proposed by Feige~\cite{F06}, and recently used by Dobzinski, Fu, and Kleinberg~\cite{DFK10} to obtain truthful mechanisms.
To round this linear program we apply the decomposition technique of Lavi and
Swamy~\cite{LS05}, which builds on Carr and Vempala~\cite{CV02}.
To argue truthfulness (in expectation) we use the concept of a
``maximal-in-distributional-range (MIDR)'' algorithm, which was first
articulated by Dobzinski and Dughmi~\cite{DD09}, generalizing the
``maximal-in-range (MIR)'' definition in Nisan and Ronen~\cite{NR07}.
Finally, for our truthful approximation scheme for valuations
supported by an excluded-minor graph, uses the decomposition theorem of
Devos et al.~\cite{DDOPRSV04}, which in turn uses the structure theorem of
Robertson and Seymour \cite{RS04}.

\section{Mechanism Design Preliminaries}
\label{s:prelim}

We study direct-revelation mechanisms for players with quasi-linear
utilities.  In this setting, a {\em mechanism} is a
pair~$(\allocs,\prices)$ of algorithms.  The {\em allocation
rule~$\allocs$} takes as input a reported hypergraph valuation~$b_i$
from each
player, and outputs a feasible assignment of goods to the players.
The {\em payment rule}~$\prices$ takes the same input and computes a
payment to the mechanism from each player.  If a player~$i$ with a
valuation~$v_i$ is assigned the goods~$\alloci(b_1,\ldots,b_n) \sse M$
and payment~$\pricei(b_1,\ldots,b_n) \ge 0$, then $i$ earns {\em
utility} $v_i(\alloci(b_1,\ldots,b_n)) - \pricei(b_1,\ldots,b_n)$.
Let $x=X(v_1,\ldots,v_2)$ be the allocation when all players report truthfully.
The {\em welfare} $v(x)$ of the allocation $x=(x_1,\ldots,x_n)$
with respect to the
valuations $v=(v_1,\ldots,v_n)$ is defined to be $v(x)=\sum_{i=1}^n v_i(x_i)$.

A deterministic mechanism is {\em truthful} if, for every player~$i$
and fixed reports $b_{-i}$ by the players other than~$i$, player~$i$
maximizes its utility by reporting its true valuation~$v_i$,.

The {\em VCG mechanism} is a truthful mechanism that achieves the
maximum-possible welfare (but generally not in polynomial
time~\cite{NR01}).  This mechanism computes the allocation that
maximizes the welfare with respect to the reported valuations, and
then charges suitable ``externality payments'' to achieve truthfulness.
A {\em maximal-in-range (MIR)} allocation rule pre-commits to a subset
of feasible allocations, before receiving players' reports, and
maximizes the welfare with respect to the reported valuations over
this subset.  Every MIR allocation rule induces a deterministic
truthful mechanism via VCG-type payments~\cite{NR07}.  MIR allocation
rules are useful when the subset of allowable allocations can be
made structured enough to permit polynomial-time optimization, yet
large enough to always contain a near-optimal allocation.
A {\em maximal-in--distributional-range (MIDR)} allocation rule is the
following randomized analogue: pre-commit to a set of distributions
over feasible allocations, prior to receiving players' reported
valuations; choose the distribution that maximizes expected welfare
with respect to the reported valuations; and finally, sample a single
feasible allocation from the chosen distribution.  Every MIDR
allocation rule induces a truthful-in-expectation mechanism when
supplemented with VCG-type payments~\cite{DD09}.


\section{Truthful Approximation Mechanisms for Excluded-Minor Graphs}
In this section we impose restricted structure on bidders' hypergraph
valuations and rewarded with a near-optimal truthful mechanism.
The key assumption is that bidders valuations have rank~2, with the
edges drawn from a known graph that excludes a fixed minor, such as a
planar graph.  This assumption is not unreasonable in the motivating
applications for combinatorial auctions.
For example, Kagel, Lien, and Milgrom \cite{KLM10} have modeled
player valuations in the FCC spectrum action as subgraphs of a fixed
planar graph that captures the synergies of winning nearby licenses
(similar models were suggested in Brunner et al. \cite{BGHL10}).
In \ShortLong{
the full version}{Appendix~\ref{app:chromatic}}
we also present a truthful mechanism with approximation factor equal to
the chromatic number of the square of the line graph of the graph.

\subsection{Standard Definitions}
We begin by recalling some standard definitions.
A graph $X$ is a \emph{minor} of a graph $G$ if there exists a sequence of edge deletions and edge contraction that start with $G$ and end with $X$. A graph $G$ \emph{excludes $X$ as a minor} if $X$ is not a minor of $G$.

\ShortLong{
Due to limited space we omit 
the standard definition of \emph{tree decomposition} with width $k$ of a graph $G$, it can be found in the full version.}{
\begin{definition}
A graph $G=(M,E)$ has a \emph{tree decomposition} with width  $k$ if there exists a tree  $(X, T)$, where $X = \{X_1, ..., X_n\}$ is a set of \emph{bags} (where each $X_i \subseteq M$), and $T$ is a tree whose nodes are the bags $X_i$. Such that
\begin{enumerate}
\item  $\bigcup {X_i} = M$ (every node in $M$ appears in at least one bag);

\item $\max |X_i| \le k+1$ (each bag contains at most $k+1$ nodes);

\item for every $(u,v) \in E$ there exists $i$ such that $u,v \in X_i$ (for every edge there exists a bag that contains both nodes);

\item if $X_i$, $X_j$ and $X_\ell$ are nodes, and $X_\ell$ is on the path in $T$ from $X_i$ to $X_j$, then $X_i \cap X_j \subseteq X_\ell$ (for every node in $M$, the set of bags that contain it induces a subtree of $T$).
\end{enumerate}
\end{definition}
}
The \emph{treewidth} of a graph $G$ is the least integer $k$ for which $G$ has a tree decomposition with width  $k$.

A graph valuation $(M,E_v)$ of a player is a {\em subgraph} of a given graph $G=(M,E)$ of every edge $e\in E_v$ belongs to $E$, that is,  $e\in E_v$ implies  $e\in E$.

\subsection{Mechanisms}
We next show that if all players' valuations are subgraphs of a given graph with {\em constant} treewidth then welfare maximizing allocation, and thus the efficient and truthful VCG mechanism, can be computed in polynomial time.

\begin{theorem}\label{thm:treewidth}
If the valuations of all players are subgraphs of a given graph $G$ with constant treewidth $k$,
then the welfare maximizing allocation can be computed in time $n^{O(k)}$.
\end{theorem}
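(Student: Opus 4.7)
The plan is dynamic programming over a tree decomposition of $G$, with states indexed by the assignment of bag vertices to players, combined with VCG. Let $(X, T)$ be a tree decomposition of $G$ of width $k$ (given with the input, or computable in polynomial time for constant $k$ via Bodlaender's algorithm). Root $T$ arbitrarily and convert it, by a standard transformation, to a \emph{nice} tree decomposition whose nodes are leaf, introduce-vertex, forget-vertex, and join nodes, with $O(m)$ nodes overall. For each node $t$ and each function $\sigma : X_t \to [n]$ assigning the at most $k+1$ goods in the bag $X_t$ to players, the DP stores
\begin{equation*}
W(t, \sigma) \ = \ \max_{\pi}\ \sum_{i=1}^n v_i\bigl(\pi^{-1}(i)\bigr),
\end{equation*}
where $\pi$ ranges over extensions of $\sigma$ to a full assignment of the union $V_t$ of bags in the subtree at $t$, and $v_i$ is evaluated only on vertices and edges of $G$ contained in $V_t$. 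There are at most $n^{k+1}$ values of $\sigma$ per bag.

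The transitions are the standard ones for DP on tree decompositions. Since every edge of $G$ lies in some bag and the bags containing any fixed vertex induce a subtree, one can designate for each vertex and each edge a unique canonical bag at which to tally its weight against the player to which it is assigned; the usual bookkeeping at introduce-vertex, forget-vertex, and join nodes implements this and can be carried out in $n^{O(k)}$ time per node. Totalling over $O(m)$ nodes, the DP runs in $n^{O(k)} \cdot \poly(m)$ time. The maximum welfare equals $\max_\sigma W(r, \sigma)$ at the root $r$, and the welfare-maximizing allocation itself is recovered by standard traceback.

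The theorem then follows by wrapping this exact welfare maximizer in VCG: for each player $i$, rerun the DP on the $(n-1)$-player instance with $i$ removed and charge $i$ the resulting externality payment, yielding a truthful efficient mechanism that still runs in $n^{O(k)} \cdot \poly(m)$ time. The only delicate point I foresee is the accounting that ensures each vertex weight and each edge weight is counted exactly once across the recursion, and the nice-tree-decomposition framework is tailored to make this routine, so I do not expect a serious obstacle.
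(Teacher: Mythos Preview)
Your proposal is correct and follows essentially the same approach as the paper: dynamic programming over a tree decomposition of $G$, with one state per assignment of the at most $k+1$ bag vertices to the $n$ players, giving $n^{O(k)}$ states per bag. The only cosmetic difference is that you organize the DP via a nice tree decomposition (introduce/forget/join nodes) whereas the paper argues directly from the separator property of bags; also, the VCG wrapping you add at the end is not needed for the theorem as stated, which only concerns computing the welfare-maximizing allocation.
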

\ShortLong{
See the full version for the proof of the theorem.
}{
To prove the theorem, let $(T,X)$ be a tree decomposition of $G$, such that all bags are of size $\le k+1$. By Bodlaender \cite{B96} such a tree decomposition can be computed in linear time.
Let $v_\ell$ be the valuation of player $\ell \in N$. For any bag $X_i \subseteq M$ let $A':X_i \to N$ be a allocation of items in $X_i$ (we call $A'$ a partial allocation).
Given two (partial) allocations $A_1,A_2$ and a set $Y \subseteq M$ we say that $A_1$ \emph{agrees} with $A_2$ on $Y$ and write $A_1(Y)=A_2(Y)$ if $\forall y \in Y, A_1(y)=A_2(y)$.
We prove a slightly stronger statement using a standard dynamic programming induction argument.

\begin{lemma}\label{lem:treewidth-strong}
For any bag $X_i$ and any partial allocation $A':X_i \to N$, one can find $$\operatorname*{argmax}_{A \mid A(X_i)=A'(X_i) } \left \{ \sum_{\ell \in N} v_i(A^{-1}(\ell))  \right \}$$ the maximum efficient allocation that agrees with $A'$ on $X_i$ in time $n^{O(k)}$.
\end{lemma}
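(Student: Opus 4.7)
The plan is to establish the lemma by bottom-up dynamic programming on the tree decomposition $(T,X)$ of $G$, with the DP state indexed by (bag, partial allocation of that bag), of which there are at most $n^{k+1}$ per bag. Before setting up the recursion, I would root $T$ at an arbitrary bag $X_r$, and then use the standard ``running intersection'' property (item 4 in the definition of tree decomposition) to assign each good $j \in M$ to a canonical bag: namely, the bag $X_{\text{top}(j)}$ closest to the root among those containing $j$. Similarly, each edge $e = \{u,v\} \in E$ is assigned to the canonical bag $X_{\text{top}(e)} := X_{\text{top}(u)}$ if $v \in X_{\text{top}(u)}$, otherwise $X_{\text{top}(v)}$ — using the fact that some bag contains both endpoints and the subtrees of bags containing $u$ and $v$ intersect. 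This canonical assignment guarantees that each good's weight and each hyperedge's weight is charged to exactly one bag.

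Next I would define the DP table $W_i(A')$, for each bag $X_i$ and each partial allocation $A' : X_i \to N$, to be the maximum total welfare contributed by goods and edges whose canonical bag lies in the subtree of $T$ rooted at $X_i$, over all (global) allocations that agree with $A'$ on $X_i$. The local contribution at bag $X_i$ given $A'$ is simply the sum of $w_j$ over goods $j$ canonically assigned to $X_i$, plus the sum of $w_e \cdot \mathbf{1}[A'(u) = A'(v)]$ over edges $e = \{u,v\}$ canonically assigned to $X_i$; critically, both endpoints of any such edge are in $X_i$ by construction, so this quantity is determined by $A'$ alone. The recurrence is
\begin{equation*}
W_i(A') \;=\; \text{local}_i(A') \;+\; \sum_{X_c \text{ child of } X_i} \max_{A'' : A''|_{X_i \cap X_c} = A'|_{X_i \cap X_c}} W_c(A''),
\end{equation*}
with leaves initialized to $W_i(A') = \text{local}_i(A')$. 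The value we want for the lemma is $W_r(A'_{\text{top}})$ taken over extensions of the given $A'$ appropriately; more concretely, to answer the lemma's query for bag $X_i$ and partial allocation $A'$, I would re-root $T$ at $X_i$ (or equivalently run the DP with $X_i$ fixed and enforce consistency with $A'$ at $X_i$), yielding the maximum welfare and, by standard backpointer tracing, the allocation achieving it.

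Correctness hinges on two observations. First, every edge in any valuation is an edge of $G$ (by the subgraph hypothesis), so every edge is covered by some bag and thus has a well-defined canonical bag, ensuring all edge weights are accounted for exactly once. Second, the running-intersection property implies that if a good $j$ appears in $X_i$ but not in the parent bag, then $j$ appears in no bag outside the subtree rooted at $X_i$; hence the DP's independence of children-subtrees (given the allocation on the separator $X_i \cap X_c$) is justified, because any good appearing in two different child subtrees must also appear in $X_i$ and is therefore fixed by $A'$.

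Finally, the runtime is $n^{O(k)}$ as claimed: there are $O(m)$ bags in the decomposition (which can be obtained in linear time via Bodlaender~\cite{B96}), each bag has at most $n^{k+1}$ partial allocations, and each DP update considers at most $n^{k+1}$ partial allocations of each child bag while enforcing agreement on a separator of size at most $k+1$; the local contribution at each bag is computable in time polynomial in $k$ and the number of goods and edges canonically assigned to it. The only mildly delicate step is bookkeeping — making sure each good and each edge is counted exactly once — which the canonical-bag device handles cleanly; everything else is routine tree-decomposition dynamic programming.
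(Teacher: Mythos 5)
Your proposal is correct and follows essentially the same route as the paper: a bottom-up dynamic program over the tree decomposition with states indexed by (bag, partial allocation of the bag), using the running-intersection property to argue that child subtrees interact only through the separator, giving $n^{O(k)}$ time. Your canonical-bag assignment of goods and edges just makes explicit the bookkeeping (each weight counted exactly once) that the paper's proof treats as part of the ``standard dynamic programming'' step, and the re-rooting at $X_i$ handles the lemma's agreement constraint in the same spirit as the paper's fixed root $X_r$.
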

\begin{proof}
Consider the tree $T$ and fix an arbitrary root bag $X_r$.  For any bag $X_j$ let $M[X_j] \subseteq M$ be the set of all nodes that are in bags that are in the subtree of $T$ that contains $X_j$ and all its descendant bags (with respect to the root $X_r$).

Consider any two immediate children bags $Y_1,Y_2$ of a bag $X_i$ and let $M_j =M[Y_j] \setminus X_i$ for $j \in \{1,2\}$. The main observation is that there does not exists a pair $(u,v) \in M \times M$ such that $u \in M_1$ and $v \in M_2$ (this includes the case that $u=v$).
This is true because by the property of a tree decomposition, such an edge (or vertex if $u=v$) would imply that both $u$ and $v$ must belong to $X_i$, but we removed $X_i$ from both $M_1$ and $M_2$.

This observation suggests that the optimal allocation for $M[X_i]$ has the property that its restriction to $M[Y_j]$ is also the optimal solution for subgraph induced by $M[Y_j]$ (for any $j \in \{1,2\}$). This implies that we can use a standard dynamic programming algorithm. Given a bag $X_i$ we can compute for each allocation $A':X_i \to N$ (there are at most $n^{k}$ such allocations), the optimal allocation $A:M[X_i] \to N$ that agrees with $A'$ (so $A'(X_i)=A(X_i)$).  This can be done by going over all immediate children bags of $X_i$ ($Y_1,\dots,Y_\ell$) and for each such bag $Y_j$ going over all partial allocations $A'_j:Y_j \to N$ that agree with $A':X_i \to N$ and choose the one whose optimal allocation $A_j: M[Y_j]\to N$ for subgraph induced $M[Y_j]$ has maximal welfare. By the observation above, this procedure will find the optimal allocation for $M[X_i]$ that agrees with any partial allocation $A':X_i \to N$. Moreover this can be done using $n^{k+O(1)}$ space and time using a standard dynamic programming approach.
\end{proof}
}

Using the result on bounded treewidth graphs, we devise an approximately efficient, truthful mechanism for valuation that are subgraphs of graphs excluding a fixed minor. We begin by presenting the result for planar graphs on which the basic tool we use is Baker's decomposition:

\begin{theorem}\cite{Baker94}
For any planar graph $G=(M,E)$ and parameter $k$ there exists a partition of $G$ into $k+1$ parts $P_0,\dots,P_k$ such that for any $0 \le i \le k$ the graph induced by $M\setminus P_i$ has treewidth $\le 3k$.
\end{theorem}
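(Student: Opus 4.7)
The plan is to follow the classical layering argument that underlies Baker's technique. First I would fix any planar embedding of $G$ and pick an arbitrary root vertex $v_0$ (or equivalently, pick the outer face). Perform a breadth-first search from $v_0$, partitioning $M$ into distance layers $L_0, L_1, L_2, \ldots, L_d$, where $L_j$ is the set of vertices at BFS distance exactly $j$ from $v_0$. A key property of BFS is that every edge of $G$ lies either inside some layer $L_j$ or between two consecutive layers $L_j$ and $L_{j+1}$.

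Next, I would define the partition by grouping layers according to their residue modulo $k+1$: set $P_i = \bigcup_{j \,:\, j \equiv i \pmod{k+1}} L_j$ for $i = 0, 1, \ldots, k$. By construction this is a partition of $M$ into $k+1$ parts. After deleting $P_i$, the remaining graph $G[M \setminus P_i]$ decomposes into vertex-disjoint pieces, each consisting of at most $k$ consecutive BFS layers (since edges of $G$ only connect same-layer or adjacent-layer vertices, so the removed layers separate the graph into slabs). The treewidth of $G[M \setminus P_i]$ is the maximum of the treewidths of these pieces.

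It then remains to bound the treewidth of one such slab $H$, which lives on at most $k$ consecutive BFS layers of a planar graph. Such $H$ is $k$-outerplanar: it has a planar embedding in which iteratively removing the outer-face vertices $k$ times exhausts the graph (the BFS layers serve as the ``onion rings''). Here I would invoke the classical result of Bodlaender that any $k$-outerplanar graph has treewidth at most $3k-1$, giving the desired bound of at most $3k$ for each slab, and hence for $G[M \setminus P_i]$ as a whole.

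The main obstacle in a fully self-contained proof would be the $k$-outerplanar $\Rightarrow$ treewidth $\le 3k-1$ step, which is nontrivial and involves constructing an explicit tree decomposition from the layered embedding (bags correspond to carefully chosen separators crossing the rings). Assuming this as a black box (as is standard in the Baker-decomposition literature), the remainder of the argument is essentially the two-line observation that deleting every $(k+1)$-th BFS layer slices a planar graph into $k$-outerplanar pieces. The only other item to verify is the boundary case where $d < k$, in which case some $P_i$ is empty and $G$ itself must already be $k$-outerplanar, so the bound holds trivially.
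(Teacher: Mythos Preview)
Your proposal is correct and mirrors exactly what the paper does: it cites Baker's theorem rather than proving it in full, describes the same construction (pick a root $r$ and let $P_i$ be the BFS layers whose distance from $r$ is congruent to $i$ modulo $k+1$), and asserts---just as you do---that each remaining slab of $k$ consecutive layers is $k$-outerplanar and therefore has treewidth at most $3k$. The only black box you flag (the $k$-outerplanar $\Rightarrow$ treewidth $\le 3k-1$ bound of Bodlaender) is precisely the step the paper also leaves to citation.
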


As shown in \cite{Baker94} a simple way to generate the elements of the partition is to choose an arbitrary root $r$ and for any $0 \le i \le k$ let $P_i$ be the set of all points whose distance from $r$ is $(k+1) \ell + i$ for some integer $\ell$.

\begin{figure}[t]
\hrule\medskip
\textbf{Common Knowledge}: a planar graph $G=(M,E)$, parameter $\epsilon>0$.\\
\textbf{Input}: reported graph valuations (that are subgraphs of $G$)
$v_1=(M,E_1,w_1),\ldots,v_n=(M,E_n,w_n)$.
\begin{enumerate}

\item Let $k =  \lceil 2/\epsilon \rceil$.  Choose an arbitrary root $r$ and for any $0 \le i \le k$ let $P_i$ be the set of all points whose distance from $r$ is $(k+1) \ell + i$ for some integer $\ell$.

\item For every $0 \le i \le k$ let $M_i$ be the graph induced by $M \setminus P_i$. Note that $M_i$ is $k$ outer-planar and hence has tree width $\le 3 k$.

\item For every $0 \le i \le k$, compute the efficient allocation for $M_i$ using the valuations that are induced by these items.
(can be done in polynomial time by Theorem~\ref{thm:treewidth}). Let $A_i$ be the resulting allocation, and let $R_i$ be the welfare of $A_i$. 

\item Pick any $i^*\in \operatorname*{argmax}_i\{R_i\}$, then use the allocation $A_{i^*}$.


\end{enumerate}
\caption{\textsf{A maximal-in-range $(1+\epsilon)$-approximation algorithm when all players' valuations are subgraphs of a given planar graph}\label{f:planar}}

\medskip\hrule
\end{figure}

\begin{theorem}\label{thm:planar}
Fix any $\epsilon>0$. Assume that we are given a planar graph $G$ such that the valuations of all the players are subgraphs of $G$.
The algorithm of Figure \ref{f:planar}, when combined with VCG payments, yields a truthful, polynomial time, $(1+\epsilon)$-approximation mechanism for the welfare maximization problem.
\end{theorem}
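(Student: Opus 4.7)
The plan is to verify, in turn, three properties of the algorithm in Figure~\ref{f:planar}: (i) that its allocation rule is maximal-in-range, so that VCG payments yield truthfulness; (ii) that it runs in polynomial time for fixed $\epsilon$; and (iii) that the output achieves welfare at least $\mathrm{OPT}/(1+\epsilon)$.

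For (i), I would observe that the planar graph $G$, the parameter $\epsilon$, the integer $k$, and the root $r$ are all common knowledge, so the partition $\{P_0,\ldots,P_k\}$ is fixed before any bids are received. Let
\[
\mathcal{R} \;=\; \bigcup_{i=0}^{k}\bigl\{\text{feasible allocations that leave every item of } P_i \text{ unassigned}\bigr\}.
\]
Step~3 computes, for each $i$, the welfare-maximizing element of $\mathcal{R}$ supported on $M_i$, and Step~4 returns the overall maximizer. This is exactly an MIR rule on the pre-committed range $\mathcal{R}$, so by the discussion in Section~\ref{s:prelim} it becomes truthful when combined with VCG-type payments. For (ii), each induced subgraph $M_i$ has treewidth at most $3k$, which is constant once $\epsilon$ is fixed; Theorem~\ref{thm:treewidth} then computes each $A_i$ and its welfare $R_i$ in time $n^{O(k)}$, and this is done $k+1$ times.

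For (iii), let $A^*$ be an optimal allocation of welfare $\mathrm{OPT}$, and for each $i$ let $A^*_i$ be obtained from $A^*$ by unassigning every item of $P_i$; then $A^*_i \in \mathcal{R}$, so $R_i \ge \mathrm{welfare}(A^*_i)$. I would then count contributions to $\sum_i \mathrm{welfare}(A^*_i)$. Each vertex weight $w_j$ appears in $\mathrm{welfare}(A^*_i)$ for exactly $k$ values of $i$, since $j$ lies in precisely one part $P_a$. Each edge $e$ whose endpoints are assigned to the same player under $A^*$ appears in $\mathrm{welfare}(A^*_i)$ whenever $P_i$ contains neither endpoint, which rules out at most two of the $k+1$ values of $i$ and therefore leaves at least $k-1$. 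Summing,
\[
\sum_{i=0}^{k} \mathrm{welfare}(A^*_i) \;\ge\; (k-1)\cdot \mathrm{OPT},
\]
so by averaging some $R_i \ge \frac{k-1}{k+1}\mathrm{OPT}$. Choosing $k = \lceil 2/\epsilon \rceil$ (or a small additive adjustment) makes $\frac{k+1}{k-1} \le 1+\epsilon$, giving $R_{i^*} \ge \mathrm{OPT}/(1+\epsilon)$.

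The main obstacle, to the extent there is one, is the edge accounting in the averaging step: one must verify that an edge straddling two distinct parts $P_a$ and $P_b$ is still retained by $k-1$ of the subgraphs $M_i$ rather than being double-counted as lost, and that vertex weights do not spuriously cancel this bound. Everything else is a direct assembly of the MIR framework from Section~\ref{s:prelim}, Baker's decomposition, and the bounded-treewidth subroutine of Theorem~\ref{thm:treewidth}.
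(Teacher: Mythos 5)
Your proposal is correct and follows essentially the same route as the paper: an MIR rule over the Baker decomposition pieces solved via the bounded-treewidth dynamic program, plus an averaging argument in which each vertex is lost in at most one of the $k+1$ restricted instances and each edge in at most two (your retention count $k\sum_u\beta_u+(k-1)\sum_e\beta_e$ is just the complement of the paper's loss bound $\sum_i\alpha_i\le 2\beta$). The only difference is cosmetic bookkeeping, including the minor $\bigl(1-\epsilon\bigr)$ versus $1/(1+\epsilon)$ slack that both you and the paper absorb by a small adjustment of $k$.
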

\begin{proof}
Let $P_0,\dots,P_k$ be the parts. For each graph $G_i$ induced by the vertices $M \setminus P_i$, let $A_i$ be the allocation that is maximally efficient for $G_i$ among all allocations that are restricted to items $M \setminus P_i$.
By Theorem~\ref{thm:treewidth} $A_i$ can be computed in polynomial time.
Note that the allocation rule that computes $A_i$ is maximal-in-range, and therefore the algorithm that chooses the best $A_i$ is also maximal in range. Maximal in range algorithms can be combined with truth-telling VCG payments, computable in polynomial time, to yield a truthful mechanism.

Consider an optimal allocation $A$. For any item $u \in M$ let $\beta_u$ be the contribution of $u$ to the welfare induced by $A$, for any edge $e=(u,v) \in E$ let $\beta_e$ be the contribution of edge $e$ to the welfare induced by $A$. Let $\beta = \sum_{u \in M} \beta_u + \sum_{e\in E} \beta_e$ be the optimal welfare obtained by $A$. For any part $P_i$, let $\alpha_i$ be the welfare of $A$ from the items in $P_i$ and the edges that have at least one vertex in $P_i$ (formally $\alpha_i = \sum_{u \in P_i} \beta_u + \sum_{e \in E \mid e \cap P_i \ne \emptyset} \beta_e$). Observe that $\sum_{0 \le i \le k} \alpha_i \le 2 \beta$ (since each vertex is counted once and each edge is counted at most twice). A mechanism that optimizes on the graph induced by $M\setminus P_i$ obtains welfare of at least $\beta-\alpha_i$.
Thus the welfare of choosing the part $P_i^*$ with the highest welfare is at least $\frac{1}{k+1} \sum_{0 \le i \le k} \beta-\alpha_i \ge \beta - \frac{2}{k}\beta \ge \beta (1-\epsilon)$.
\end{proof}

We can extend Theorem \ref{thm:planar} to a much larger family of graphs excluding any fixed minor using the following powerful decomposition theorem.
\begin{theorem}\cite{DDOPRSV04}
For any graph $X$ and parameter $k$ there exists a constant $c=c(X,k)$  such that the following holds.
For any graph $G=(M,E)$ that excludes $X$ as a minor there exists a partition into $k+1$ parts $P_0,\dots,P_k$ such that for any $i$ the graph induced by $M\setminus P_i$ has treewidth $c$.
\end{theorem}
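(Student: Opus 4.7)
The plan is to reduce the statement to a surface-embedded analog of Baker's theorem via the Robertson--Seymour structure theorem, and then glue the resulting partitions across the tree decomposition guaranteed by that structure theorem.

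First, I would invoke the Robertson--Seymour structure theorem \cite{RS04}: for any fixed graph $X$ there exists an integer $h=h(X)$ such that every $X$-minor-free graph $G$ admits a tree decomposition $(T,\{B_t\}_{t\in T})$ in which each torso $G_t$ is $h$-almost-embeddable in a surface $\Sigma_t$ of Euler genus at most $h$. Here ``$h$-almost-embeddable'' means that after deleting at most $h$ apex vertices, the remaining graph embeds in $\Sigma_t$ with at most $h$ vortices, each of path-width at most $h$, attached along facial cycles. In addition, the adhesion (the size of $B_t\cap B_{t'}$ for adjacent bags) can be taken to be at most $h$.

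Next, I would prove a Baker-type layering lemma for $h$-almost-embeddable graphs on surfaces of bounded Euler genus: there is a constant $c_0=c_0(h,k)$ such that for any such graph $H$ there is a partition of $V(H)$ into $k+1$ parts $Q_0,\dots,Q_k$ with $\mathrm{tw}(H-Q_i)\le c_0$ for every $i$. This is the main obstacle of the proof. To handle it I would (a) first discard the apex vertices by placing them in \emph{every} $Q_i$ (there are at most $h$ of them per torso, so this only increases treewidth by an additive constant after deletion), (b) contract each vortex to a single representative vertex on its facial cycle and then use the bounded pathwidth of the vortex to re-expand without blowing up treewidth, and (c) on the resulting genuine surface embedding perform a BFS-based layering from a fixed root, letting $Q_i$ be the layers whose BFS depth is congruent to $i$ modulo $k+1$. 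On a surface of Euler genus $g$, deleting such a residue class of BFS layers leaves a $(k)$-``outer-like'' subgraph whose treewidth is bounded by $O(k(g+1))$, generalizing Baker's planar argument via Eppstein's diameter--treewidth relation for bounded-genus graphs.

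Finally, I would lift the local partitions to a global one. For each torso $G_t$, let $Q_0^t,\dots,Q_k^t$ be the partition produced by the previous step; I would take $P_i=\bigcup_t(Q_i^t \setminus \bigcup_{t'\text{ adj }t} B_{t'})$, extended consistently on adhesions by a fixed rule (e.g., assigning each adhesion vertex to the part chosen by the closer-to-root torso). Because adhesions have size at most $h$ and $G-P_i$ inherits the same tree decomposition with torsos equal to $G_t-P_i$, one gets $\mathrm{tw}(G-P_i)\le c_0+h$ by the standard fact that treewidth composes additively along tree decompositions with bounded adhesion. Setting $c=c_0(h,k)+h$ yields the promised bound $c=c(X,k)$, completing the proof. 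The main obstacle, as noted, is the surface layering lemma with apex vertices and vortices; everything else is a bookkeeping reduction to that core combinatorial statement.
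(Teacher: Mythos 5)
First, note that the paper does not prove this statement at all: it is imported verbatim as a black-box citation to DeVos et al.~\cite{DDOPRSV04}, with only the remark that their proof (together with the structure theorem of~\cite{RS04}) yields a polynomial-time partition. So there is no in-paper proof to compare against; your proposal has to be judged as an attempted proof of the cited theorem itself. Your outline does follow the same general strategy as the published proof --- Robertson--Seymour structure theorem, a Baker-type layering lemma for the (almost-embeddable) torsos, and a gluing step over the tree decomposition --- but as written it is a sketch with the hard part left open, not a proof.

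Concretely, three gaps. (1) The ``surface layering lemma with apex vertices and vortices'' that you defer is the actual mathematical content of the theorem; asserting that BFS layering on a genus-$g$ surface plus vortex contraction/re-expansion gives treewidth $O(k(g+1))$ is exactly what needs a careful argument (vortex boundary vertices can spread over many BFS layers, so ``contract and re-expand'' needs to be justified, not named). (2) Your apex step is not well formed: you cannot place the apex vertices ``in every $Q_i$'' when the $Q_i$ are required to be a partition. The correct move is to color each apex arbitrarily and then absorb the at most $h$ apexes into every bag of the tree decomposition of the remainder, paying an additive $h$ in width; that is a different (and valid) argument, but it is not what you wrote. (3) The gluing step has a real consistency issue you gloss over: a vertex in an adhesion set gets a color from the parent torso, which may disagree with the color your local Baker partition of the child torso wanted, so $G_t - P_i$ is not $G_t - Q_i^t$ and the bound $c_0$ does not directly apply. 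Because adhesion is bounded by $h$ this is repairable --- either strengthen the local lemma to respect a prescribed coloring of at most $h$ pre-colored vertices, or add the at most $h$ misaligned adhesion vertices to all bags, again paying $+h$ --- but some such argument must be made explicit; also, the clean statement of the composition fact is that if every torso of a tree decomposition has treewidth at most $w$ (adhesion sets being cliques in torsos), then $\mathrm{tw}(G) \le w$, so your ``$c_0 + h$'' is obtainable but the reasoning behind it needs to be the absorption argument above rather than a generic ``additive composition'' principle.
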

Moreover, the proof in \cite{DDOPRSV04} (along with the required structure theorem) implies that the partition can be computed in polynomial time.
Using this and the same approach as in Theorem \ref{thm:planar}
gives the following result.
\begin{proposition}
Fix any $\epsilon>0$ and a graph $X$. Assume that we are given a graph $G$ that excludes $X$ as a minor such that the valuations of all the players are subgraphs of $G$.
There exists a truthful mechanism that runs in polynomial time and
guarantees a $(1 + \epsilon)$-approximation to the social welfare.
\end{proposition}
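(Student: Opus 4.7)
The plan is to adapt the argument of Theorem~\ref{thm:planar} essentially verbatim, substituting the excluded-minor decomposition theorem of \cite{DDOPRSV04} for Baker's decomposition. Fix $\epsilon > 0$ and set $k = \lceil 2/\epsilon \rceil$, and let $c = c(X,k)$ be the constant given by the decomposition theorem. Since $k$ is determined by $\epsilon$ and $X$ is fixed, $c$ is a constant independent of the input size.

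The algorithm proceeds analogously to the one in Figure~\ref{f:planar}: first, invoke the (polynomial-time) algorithm implicit in \cite{DDOPRSV04} to compute a partition $P_0, \ldots, P_k$ of the vertex set $M$ of $G$ such that, for every $i$, the induced subgraph $G[M \setminus P_i]$ has treewidth at most $c$. Next, for each $i \in \{0,\ldots,k\}$, restrict each player's reported valuation to the vertices and edges of $G[M \setminus P_i]$ (this is still a subgraph valuation on a graph of treewidth at most $c$), and compute the welfare-maximizing allocation $A_i$ on these items using Theorem~\ref{thm:treewidth}. Since $c$ and $k$ are constants, each such computation runs in time $n^{O(c)}$, and the total running time is polynomial. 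Finally, output the allocation $A_{i^*}$ with maximum welfare among $A_0, \ldots, A_k$.

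For truthfulness, observe that the algorithm is maximal-in-range: its range is the union, over $i \in \{0,\ldots,k\}$, of all allocations that assign only items in $M \setminus P_i$ to players; on each such fixed $i$, it selects the welfare-maximizing allocation in the corresponding slice, and then it takes the best overall. Combining with VCG payments, which can be computed by rerunning the allocation rule $n+1$ times, yields a truthful polynomial-time mechanism.

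For the approximation ratio, let $A$ be an optimal allocation with welfare $\beta$, and define $\beta_u$, $\beta_e$ and $\alpha_i = \sum_{u \in P_i}\beta_u + \sum_{e \in E\,:\, e \cap P_i \neq \emptyset}\beta_e$ exactly as in the proof of Theorem~\ref{thm:planar}. Each vertex lies in exactly one $P_i$ and each edge has at most two endpoints, so $\sum_{i=0}^{k}\alpha_i \le 2\beta$. The allocation restricted to $M \setminus P_i$ has welfare at least $\beta - \alpha_i$, and since $A_i$ is optimal in that restricted problem, the best of the $A_i$'s achieves welfare at least $\frac{1}{k+1}\sum_{i=0}^{k}(\beta-\alpha_i) \ge \beta - \frac{2}{k+1}\beta \ge (1-\epsilon)\beta$, establishing the $(1+\epsilon)$-approximation. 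The main conceptual point to verify, and the only place where one must appeal to outside machinery, is that the decomposition from \cite{DDOPRSV04} can indeed be computed in polynomial time and that $c(X,k)$ is a true constant; both follow from the remark immediately after the decomposition theorem in the excerpt.
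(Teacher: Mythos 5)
Your proposal is correct and follows essentially the same route as the paper, which proves the proposition by invoking the polynomial-time computability of the decomposition of~\cite{DDOPRSV04} and repeating the argument of Theorem~\ref{thm:planar} (maximal-in-range selection over the $k+1$ restricted instances, VCG payments, and the same charging argument bounding $\sum_i \alpha_i \le 2\beta$). The details you fill in match what the paper leaves implicit.
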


\section{Approximate Welfare Maximization Algorithm}
\label{s:wm}

This section gives an $r$-approximation algorithm for
welfare maximization with general and private hypergraph-$r$
valuations.  We do not expect
that this algorithm, or small modifications to it, can lead to a
truthful 
auction (see Remark~\ref{rem:wm}).
Nonetheless, this result makes two important points.  First, in
conjunction with known hardness results (Theorem~\ref{t:hard}), it
precisely pins down the approximability of the welfare maximization
problem for this class of bidder valuations.
Second,
it demonstrates
that welfare maximization with hypergraph-$r$ valuations is as
tractable as with the far less expressive class of single-minded
valuations with bundle size at most~$r$.

Recall that an instance of the welfare maximization problem is
described by~$n$ hypergraphs~$H_1=(M,E_1,w_1),\ldots,H_n=(M,E_n,w_n)$ of
rank~$r$, which
represent players' valuations, all with a common vertex set~$M$, which
represents the goods.
Feasible solutions are assignments of the goods
to the players, which correspond to
an ordered partition~$S_1,\ldots,S_n$ of~$M$.  The goal is to compute the
assignment maximizing the welfare $\sum_{i=1}^n v_i(S_i)$, where the
hypergraph~$H_i$ defines the value~$v_i(S_i)$ as in
equation~\eqref{eq:def}.
We next present the main result of this section.
\begin{theorem}\label{t:wm}
There is a polynomial-time, $r$-approximation randomized algorithm for welfare maximization with hypergraph-$r$ valuations.
\end{theorem}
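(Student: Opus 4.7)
The plan is to solve a natural LP relaxation of the problem and then apply a correlated randomized rounding based on independent exponential clocks, designed so that vertex marginals are preserved exactly while each hyperedge $e\in E_i$ ends up entirely in the hands of player $i$ with probability at least a $1/r$ fraction of its LP value.

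First I would write the LP: for each player $i$ and good $j$ introduce $x_{i,j}\in[0,1]$; for each $i$ and $e\in E_i$ introduce $y_{i,e}\in[0,1]$. Impose $\sum_i x_{i,j}\le 1$ for every $j$ and $y_{i,e}\le x_{i,j}$ for every $i$, $e\in E_i$, $j\in e$. Maximize $\sum_i(\sum_j w_{i,j}x_{i,j}+\sum_{e\in E_i}w_{i,e}y_{i,e})$. This LP has $O(nm^r)$ variables and constraints, is solvable in polynomial time for fixed $r$, and its optimum upper-bounds $\mathrm{OPT}$. I would augment it with a dummy ``null'' player $0$ carrying zero valuation, setting $x_{0,j}=1-\sum_{i\ge 1}x_{i,j}$ so that each column sums to exactly $1$.

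For the rounding, draw independent exponential clocks $E_0,E_1,\ldots,E_n\sim\mathrm{Exp}(1)$ and assign each good $j$ to $i^{*}(j)=\arg\min_i E_i/x_{i,j}$; goods sent to the dummy player are left unassigned. By the standard exponential-race argument, $\Pr[j\to i]=x_{i,j}/\sum_{i'}x_{i',j}=x_{i,j}$, so the expected vertex-weight contribution of the rounded allocation matches the LP's vertex term exactly. For an edge $e\in E_i$ with $|e|=k\le r$ and $\mu:=\min_{j\in e}x_{i,j}$, conditioning on $E_i=s$, the event ``all of $e$ goes to $i$'' is $\{E_{i'}>s\beta_{i'}\text{ for every }i'\ne i\}$ with $\beta_{i'}:=\max_{j\in e}x_{i',j}/x_{i,j}$. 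Integrating in $s$ gives
\[
\Pr[\text{all }j\in e\to i]\;=\;\frac{1}{1+\sum_{i'\ne i}\beta_{i'}}.
\]
Since $\sum_{i'\ne i}\beta_{i'}\le\sum_{j\in e}(1-x_{i,j})/x_{i,j}\le k/\mu-k$, this probability is at least $\mu/(k(1-\mu)+\mu)\ge\mu/r$, using $k\le r$ and $\mu\le 1$, and hence at least $y_{i,e}/r$ since $y_{i,e}\le\mu$.

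Combining, the expected welfare of the rounded allocation is at least $\sum_{i,j}w_{i,j}x_{i,j}+(1/r)\sum_{i,e}w_{i,e}y_{i,e}\ge(1/r)\mathrm{LP}^{*}\ge\mathrm{OPT}/r$, which yields the claimed $r$-approximation. The main obstacle, and the place where techniques from the multiway cut, metric labeling, and graph homomorphism literature enter, is engineering a single rounding that simultaneously preserves vertex marginals exactly (so the vertex term incurs no extra loss) and guarantees the $\mu/r$ bound on the joint ``all of $e$'' event; the exponential race delivers both properties in closed form via the computation above.
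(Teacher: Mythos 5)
Your proposal is correct, and it rounds the same compact LP relaxation as the paper (vertex variables $x_{ij}$, edge variables bounded by $\min_{j\in e}x_{ij}$), but with a genuinely different rounding procedure and analysis. The paper's algorithm is an iterative, CKR/Kleinberg--Tardos-style scheme: while goods remain, pick a player uniformly at random and a threshold $t\in[0,1]$ uniformly at random, and give that player all unassigned goods $j$ with $x^*_{ij}\ge t$; the per-edge guarantee $\Pr[\text{all of }e\to i]\ge z^*_{ie}/|e|$ is then obtained by conditioning on the first iteration that touches $e$ and applying Bayes' rule, using $\sum_\ell x^*_{\ell j}=1$ to bound the denominator by $|e|$. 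Your exponential-clock race replaces this iterative process by a one-shot correlated rounding, and the analysis becomes an exact closed-form computation: integrating out $E_i$ gives $\Pr[\text{all of }e\to i]=\bigl(1+\sum_{i'\ne i}\max_{j\in e}(x_{i',j}/x_{i,j})\bigr)^{-1}$, which you correctly lower bound by $\mu/(\mu+k(1-\mu))\ge y_{i,e}/k\ge y_{i,e}/r$ (the degenerate case $\mu=0$ is trivial since then $y_{i,e}=0$), while vertex marginals are preserved exactly by the standard race argument once the dummy player makes columns sum to one. Both routes lose exactly a factor $|e|\le r$ on each hyperedge and nothing on singletons, so the guarantees coincide; what your approach buys is a cleaner, non-iterative description with exact probabilities and no ``first relevant iteration'' conditioning, while the paper's threshold scheme is more elementary (no integrals) and makes the polynomial running time of the assignment loop immediate. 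Two small points worth tightening: the quantity $\beta_{i'}$ must be read as the maximum over $j\in e$ of the ratios $x_{i',j}/x_{i,j}$ (this is what the event characterization requires, and it is what your subsequent inequality chain uses), and the LP size is in fact polynomial in the input size for every $r$, since the hyperedges are listed explicitly in the input, so the restriction to fixed $r$ is unnecessary.
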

Our result almost matches the lower bound for the algorithmic problem.


\begin{theorem}[\cite{T01}]\label{t:hard}
There is no $r/2^{O(\sqrt{\log r})}$ approximation algorithm\footnote{In particular, for any $\epsilon>0$ there is no $r^{1-\epsilon}$ approximation algorithm.}
for welfare maximization with hypergraph-$r$ valuations, unless $P = NP$.
\end{theorem}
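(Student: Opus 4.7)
The plan is to prove this inapproximability by reducing (Max) Independent Set on graphs of bounded maximum degree to welfare maximization with hypergraph-$r$ valuations; the quantitative bound $r/2^{O(\sqrt{\log r})}$ will then come directly from Trevisan's hardness for bounded-degree Independent Set, which is what [T01] supplies. The reduction is parameter-preserving in the sense that bounded-degree graphs with maximum degree $\Delta$ will map to hypergraph-$\Delta$ valuations, so the $\Delta/2^{O(\sqrt{\log \Delta})}$ gap on the IS side transfers verbatim with $r=\Delta$.

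The reduction I would use takes a graph $G=(V,E)$ of maximum degree $\Delta \le r$ (possibly weighted with vertex weights $w_v \ge 0$) and constructs an instance of welfare maximization as follows. The ground set of goods is $M := E$. For each vertex $v \in V$ we introduce a single-minded bidder $i_v$ whose hypergraph valuation has a single hyperedge on the bundle $S_v := \{e \in E : v \in e\}$ of edges incident to $v$, with weight $w_v$ (and all vertex weights $w_j = 0$). Since $|S_v| = \deg(v) \le \Delta \le r$, each valuation has rank at most $r$, so this is a legitimate hypergraph-$r$ instance, and it is produced in polynomial time.

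Next I would verify that welfare-maximizing allocations in the constructed instance are in one-to-one correspondence (in value) with maximum-weight independent sets of $G$. A bidder $i_v$ obtains her weight $w_v$ in an allocation only if she receives every edge of $S_v$, and two bidders $i_u$ and $i_v$ can both be satisfied simultaneously iff $S_u \cap S_v = \emptyset$, which holds iff $uv \notin E$. Hence any feasible allocation that satisfies a set $I \subseteq V$ of bidders forces $I$ to be an independent set in $G$, and conversely, for any independent set $I$ there is a feasible allocation (give every $v \in I$ exactly $S_v$, partition leftover edges arbitrarily) of welfare $\sum_{v \in I} w_v$. Consequently a $c(r)$-approximation algorithm for welfare maximization with hypergraph-$r$ valuations would yield a $c(\Delta)$-approximation for Max Weight Independent Set on graphs of maximum degree $\Delta$. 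Invoking Trevisan's theorem [T01], which rules out polynomial-time $\Delta/2^{O(\sqrt{\log \Delta})}$-approximations to Max IS in bounded-degree graphs unless $P=NP$, completes the proof.

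The main obstacle — more accurately, the bookkeeping step to get right — is checking that the quantitative hardness is preserved in the parameter $r$: one must confirm that Trevisan's inapproximability bound is stated for the max-degree parameter $\Delta$ (not, say, for average degree or for the number of vertices), and that the reduction above introduces no slack between the independent-set objective and the welfare objective. The correspondence is exact (every satisfied bidder contributes her $w_v$ and no more, and unsatisfied bidders and leftover edges contribute $0$ since all vertex weights are $0$), so the approximation ratio transfers without loss and the claimed $r/2^{O(\sqrt{\log r})}$ bound follows with $r=\Delta$.
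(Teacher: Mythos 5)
Your reduction is correct and is essentially the paper's own argument: the theorem is simply imported from Trevisan's hardness for independent set in bounded-degree graphs, and the paper notes it already holds for single-minded valuations with bundles of size at most $r$, which is exactly your goods-are-edges, vertices-are-single-minded-bidders construction with $r=\Delta$. No further comparison is needed.
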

Theorem~\ref{t:hard} follows easily from hardness results for finding
independent sets in bounded-degree graphs.  It even applies to the
special case of single-minded valuations with desired bundles of size
at most~$r$, which correspond to hypergraph-$r$ valuations with only
one edge with non-zero weight.

We next describe our approximation algorithm.  Algorithms that use
only ``local'' information to make allocation decisions, like natural
greedy algorithms, do not seem capable of achieving an approximation
ratio that depends only on the rank~$r$.\footnote{This contrasts with
  single-minded valuations with bundle size at most~$r$, for which a
  simple greedy algorithm provides an~$r$-approximation.  For example,
  consider an instance with~2 players and $m$ goods.  The first
  player has value~$\sqrt{m}$, say, for the first good, and no value
  for anything else.  The second player's valuation is a star centered
  on the first good, where each vertex has weight~0 and each edge has
  unit weight.  The optimal solution gives all goods to the
  second player and has welfare~$m-1$, while most natural greedy
  algorithms allocate the first good to the first player and achieve
  welfare only~$\sqrt{m}$.}  Our algorithm solves and randomly rounds
the following linear programming relaxation of the problem.

\begin{eqnarray}
\label{eq:lp1}
\mbox{max}\ \ \  \   \sum_{i=1}^n & \left( \sum_{j \in M} w_{ij}x_{ij} + \sum_{e \in
  E_i} w_{ie}z_{ie} \right)\\
\label{eq:lp2}
\mbox{subject to:}&  \sum_{i=1}^n x_{ij} = 1 &\mbox{for every  good~$j$.}\\
& z_{ie} \le x_{ij} & \mbox{for every player~$i$, edge $e \in E_i$,}\notag\\
  &&\mbox{and good $j \in e$.} \label{eq:lp3}\\
& x_{ij} \ge 0 & \mbox{for every player~$i$ and good~$j \in M$.}\\
\label{eq:lp5}
& z_{ie} \ge 0 & \mbox{for every player~$i$ and edge~$e \in E_i$.}\\\notag
\end{eqnarray}

Every feasible assignment naturally induces a 0-1 feasible solution
to~\eqref{eq:lp1}--\eqref{eq:lp5} with equal objective function value,
where $x_{ij} = 1$ if and only if player~$i$ is assigned good~$j$, and
$z_{ie} = 1$ if and only if player~$i$ is assigned every good in~$e$.
The size of the linear program~\eqref{eq:lp1}--\eqref{eq:lp5} is
polynomial in the input size, so it can be solved in polynomial time.

We round the optimal solution to the linear
program~\eqref{eq:lp1}--\eqref{eq:lp5} using the randomized algorithm
shown in Figure~\ref{f:wm}.  The algorithm uses only the $x$-variables
to make assignments; the~$z$-variables are used in the analysis.


\begin{proof}{\bf (of Theorem~\ref{t:wm})}
Since by equality~(\ref{eq:lp2}) every good is fully assigned in $x^*$, the algorithm runs in polynomial time (both in expectation and with high probability). We next prove the approximation.
We claim that, for every player~$i$ and edge~$e \in E_i$, player~$i$
is assigned every good in~$e$ by the algorithm in Figure~\ref{f:wm}
with probability at least $z^*_{ie}/|e|$.
Similarly, each player~$i$ is assigned good~$j$ with probability at
least $x^*_{ij}$.  This claim, combined with linearity of expectation and
the assumption that $|e| \le r$ for all~$i$ and $e \in E_i$, implies
the theorem.

We observe that if a player~$\ell$ is chosen
in an iteration of the randomized rounding algorithm and no good of an
edge~$e$ has been assigned in a previous iteration, then this player
will get at least one
good of~$e$ with probability $\max_{j \in e} x^*_{\ell j}$, and every
good of~$e$ with probability $\min_{j \in e} x^*_{\ell j}$.

Now fix a player~$i$ and an edge~$e$.  (The same argument works for a
good~$j$, viewed as an edge of size~1.)
We only need to lower bound the probability that~$i$ receives all
of the goods in~$e$ in the same iteration of the randomized rounding
algorithm.

\eat{
Consider any iteration~$q$ such that up to that iteration none of the good of~$e$ were assigned (note that at least one such iteration exists).
Let
$Y_{e}$ be an indicator for the event that {\em some} bidder has received at least one item from $e$ at iteration~$q$
such that up to that iteration none of the good of~$e$ were assigned.

It holds that
\begin{equation}\label{eq:wm1}
Pr[Y_{e}=1]= \frac{1}{n}\cdot{\sum_{\ell=1}^n \max_{j \in e} x^*_{\ell j}}\le
\frac{1}{n}\cdot{\sum_{\ell=1}^n \sum_{j \in e} x^*_{\ell j}} =
\frac{1}{n}\cdot{\sum_{j \in e} \sum_{\ell=1}^n x^*_{\ell j}}=
\frac{|e|}{n}\le \frac{r}{n},
\end{equation}
where the equality before the last inequality follows from the linear program constraints~\eqref{eq:lp2}.


Let $W_{ie}$ be an indicator for the event that bidder $i$ has received {\em all} the items of $e$ at iteration~$q$, note that $Pr[W_{ie}=1] =  \frac{1}{n}\cdot {\min_{j \in e} x^*_{ij}}\geq \frac{1}{n}\cdot z^*_{ie}$, where the inequality follows from the linear program constraints~\eqref{eq:lp3}.

We conclude that
\begin{equation}
Pr[W_{ie}=1| Y_{e}=1] = \frac{Pr[W_{ie}=1 \ and\ Y_{e}=1]}{Pr[Y_{e}=1]} = \frac{Pr[W_{ie}=1]}{Pr[Y_{e}=1]} \ge
\frac{z^*_{ie}}{r},
\end{equation}
This completes the proof.
} 

Consider the first iteration~$q$ that assigns at
least one good of~$e$ to some player.
Using Bayes' rule and the fact
that a single player is chosen uniformly at random each iteration,
the probability that~$i$ is the chosen player in iteration~$q$ is
\begin{equation}\label{eq:wm1}
\frac{\max_{j \in e} x^*_{ij}}{\sum_{\ell=1}^n \max_{j \in e} x^*_{\ell j}}
\ge
\frac{\max_{j \in e} x^*_{ij}}{\sum_{\ell=1}^n \sum_{j \in e} x^*_{\ell j}}
=
\frac{\max_{j \in e} x^*_{ij}}{\sum_{j \in e} \sum_{\ell=1}^n x^*_{\ell j}}
=
\frac{\max_{j \in e} x^*_{ij}}{|e|},
\end{equation}
where the final equality follows from the linear program
constraints~\eqref{eq:lp2}.

The probability that player $i$ receives every good in $e$ in round $q$, conditioned on $i$ being chosen in iteration $q$ (and hence receiving at least one good of $e$ at this round), is
\begin{equation}\label{eq:wm2}
\frac{\min_{j \in e} x^*_{ij}}{\max_{j \in e} x^*_{ij}}
\ge
\frac{z^*_{ie}}{\max_{j \in e} x^*_{ij}},
\end{equation}
where the inequality follows from the
linear program constraints~\eqref{eq:lp3}.  Combining
inequalities~\eqref{eq:wm1} and~\eqref{eq:wm2} shows that, independent
of the value of~$q$, player~$i$ is assigned every good in~$e$ with
probability at least $z^*_{ie}/|e| \ge z^*_{ie}/r$.
This completes the proof.
\end{proof}

\begin{figure}[t]
\hrule\medskip
\textbf{Input}: an optimal solution~$(x^*,z^*)$ of the linear
program~\eqref{eq:lp1}--\eqref{eq:lp5}.
\begin{enumerate}

\item While there is at least one unassigned good:

\begin{enumerate}

\item Choose a player~$i$ uniformly at random.

\item Choose a threshold $t \in [0,1]$ uniformly at random.

\item Assign to~$i$ every unassigned good~$j$ with $x^*_{ij} \ge t$.

\end{enumerate}

\end{enumerate}
\caption{\textsf{The randomized rounding algorithm for the linear
    program~\eqref{eq:lp1}--\eqref{eq:lp5}.}\label{f:wm}}
\medskip\hrule
\end{figure}

\begin{remark}
\label{rem:wm}
Turning the algorithm in Theorem~\ref{t:wm} into a truthful mechanism
with a similar approximation ratio seems difficult.  Essentially, the
only general technique for designing (randomized) truthful mechanisms
for multi-parameter valuations like hypergraph valuations
is via ``maximal-in-distributional-range (MIDR)'' approximation
algorithms, which we define in the next section.  The algorithm in
Theorem~\ref{t:wm} is not MIDR.  There are several reasons for this;
perhaps the most fundamental one is that hyperedges of different sizes
lose different approximation factors in~\eqref{eq:wm1}.
The decomposition technique of Lavi and Swamy~\cite{LS05}
cannot be used to transform it into an MIDR algorithm,
as our algorithm does not provide an approximation guarantee for
hypergraphs with negative edge weights.  (Negative edge weights cannot
be ignored or rounded up to zero without changing the problem.)  The
convex rounding technique of Dughmi, Roughgarden, and Yan~\cite{DRY11}
seems to yield interesting results only for subclasses of
complement-free valuations, and not for valuations with complements as
studied here.
\end{remark}


\section{A Truthful Approximation Mechanism}\label{s:ca}
In this section we present our truthful, $O(\log^r m)$-approximation mechanism for rank $r$ valuations.
Section~\ref{ss:demand} proves that demand oracles
can be implemented in polynomial time for
hypergraph valuations.  Section~\ref{ss:main} describes and analyzes
our truthful approximate combinatorial auction for hypergraph valuations.

\subsection{Demand Oracles and Hypergraph Valuations}
\label{ss:demand}

The first step of our algorithm, described in the next section, solves
an exponential-size linear programming relaxation of the welfare
maximization problem.
The complexity of solving this linear program
reduces, by dualizing and the ellipsoid method, to that of a {\em
  demand oracle} for a valuation~$v$: given a price~$p_j$ for each
good~$j \in M$, compute a bundle $S \in \arg\max \{ v(S) - \sum_{j \in
  S} p_j \}$.
Such queries can be answered in polynomial time for
hypergraph valuations.

\begin{proposition}\label{p:demand1}
Given a valuation~$v$ represented as a hypergraph $H=(M,E,w)$ and
prices~$p$ for the goods of~$M$, a bundle $S \sse M$ that maximizes
$v(S)-\sum_{j \in S} p_j$ can be computed in polynomial time.
\end{proposition}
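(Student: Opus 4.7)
I would rewrite the objective using the definition of $v$:
\[
v(S) - \sum_{j \in S} p_j \;=\; \sum_{j \in S} (w_j - p_j) \;+\; \sum_{e \,:\, e \sse S} w_e,
\]
and treat this as a maximum-weight closure problem. Introduce one node for each good $j \in M$ (with node weight $w_j - p_j$, which may be of either sign) and one node for each hyperedge $e \in E$ (with node weight $w_e \ge 0$). Add a directed arc from each edge-node $e$ to each good-node $j \in e$. Any subset $S \sse M$ of goods, together with the set $\{e \in E : e \sse S\}$ of edge-nodes, forms a closed set (no arc leaves it), and conversely every closed set in this digraph has this form since, given the goods in the closure, we can freely include every edge-node whose successors are all in the closure (edges have nonnegative weight, so we gain by doing so). Thus the demand query reduces to finding a maximum-weight closure.

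The plan is then to invoke the standard Picard reduction from max-closure to min $s$-$t$-cut. Add a source $s$ with a capacity-$(w_j - p_j)$ arc to each good-node with $w_j > p_j$ and a capacity-$w_e$ arc to each hyperedge-node; add a capacity-$(p_j - w_j)$ arc from each good-node with $p_j > w_j$ to a sink $t$; keep every original edge-to-good arc with capacity~$+\infty$. A minimum $s$-$t$ cut cannot sever any infinite-capacity arc, so the source side (minus $s$) is a closed set, and the standard calculation shows that the maximum closure value equals the total positive node weight minus the min-cut value. Output as $S$ the set of good-nodes on the source side of the min cut.

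The resulting flow network has $|M| + |E| + 2$ vertices and $O(|M| + |E| + \sum_{e \in E} |e|)$ arcs, which is polynomial in the representation size of $H$. A min $s$-$t$ cut is computable in polynomial time by any standard max-flow algorithm, so the whole procedure runs in polynomial time. There is no real obstacle: the only step requiring care is verifying that the hyperedge-nodes can always be chosen optimally to be exactly $\{e : e \sse S\}$ given the goods $S$, which follows immediately from $w_e \ge 0$ and the arc structure.
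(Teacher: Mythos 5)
Your proof is correct, but it takes a genuinely different route from the paper. The paper's argument is structural: it observes that, because all edge weights are nonnegative, a hypergraph valuation $v$ is supermodular, hence so is $S \mapsto v(S) - \sum_{j \in S} p_j$ (subtracting a linear/modular function preserves supermodularity), and then it invokes the general fact that a supermodular function can be maximized — equivalently, a submodular function minimized — with polynomially many value queries, each of which is easy to answer from the hypergraph representation. You instead exploit the specific combinatorial form of the objective, recasting it as a maximum-weight closure problem (good-nodes with weights $w_j - p_j$, edge-nodes with weights $w_e \ge 0$, arcs from each edge-node to its constituent goods) and solving it via Picard's reduction to a minimum $s$-$t$ cut; your bookkeeping — that closed sets correspond to pairs $(S, F)$ with $F \subseteq \{e : e \sse S\}$, and that nonnegativity of $w_e$ lets the optimum take $F = \{e : e \sse S\}$, so the good-nodes on the source side of the min cut solve the demand query — is accurate. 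The trade-off: the paper's proof is shorter and highlights the supermodularity of the valuation class, but leans on general (and heavyweight) supermodular-maximization machinery; your reduction is more self-contained and concrete, and yields a much more practical algorithm, namely a single max-flow computation on a network whose size is linear in the representation of $H$. Both proofs hinge on the same essential hypothesis, $w_e \ge 0$ — without it the function is no longer supermodular and your closure argument for including all edge-nodes contained in $S$ also breaks — so neither extends to negative edge weights, consistent with the paper's remarks.
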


\begin{proof}
(Sketch.)
First, a {\em value query} --- given a subset $S \sse M$ of goods,
return the value~$v(S)$ --- can be computed in time polynomial in the
size of~$H$ by brute force, using equation~\eqref{eq:def}.

Next, recall that a function~$v$ is supermodular if and only if,
  for every $S \sse M$ and $j,k \in M \sm S$,
$v(S \cup \{j,k\}) - v(S \cup \{j\}) \ge v(S \cup \{k\}) - v(S)$.
Since edge weights are nonnegative, the definition in equation~\eqref{eq:def} of a
  hypergraph valuation~$v$ easily implies that~$v(S)$, and hence also $v(S)
  - \sum_{j \in S} p_j$, is a supermodular function on~$M$.

Finally, recall that every supermodular function can be maximized
using a polynomial number of value queries
(e.g.~\cite{fleischer_survey}).
\end{proof}

\subsection{Description and Analysis of the Mechanism}
\label{ss:main}
The main result of this section is the following theorem
\begin{theorem}
Fix any constant $r$. There is a polynomial time, truthful-in-expectation mechanism that guarantees
$O(\log^r m)$-approximation to the social welfare when all players have hypergraph-$r$ valuations.
\end{theorem}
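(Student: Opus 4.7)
The plan is to follow the blueprint sketched in the related-work discussion: solve the configuration LP of the welfare maximization problem, introduce proxy bidders à la Feige and Dobzinski--Fu--Kleinberg to control the per-bidder integrality gap, decompose the resulting fractional solution into a distribution over integer allocations via Lavi--Swamy, and verify that this decomposition produces a maximal-in-distributional-range (MIDR) allocation rule, so that VCG-style payments deliver truthfulness in expectation.

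\textbf{Step 1 (Solving the configuration LP).} I would write down the configuration LP with a variable $x_{i,S}$ for every bidder $i$ and bundle $S\sse M$, with constraints $\sum_S x_{i,S}\le 1$ for each $i$ and $\sum_i\sum_{S\ni j} x_{i,S}\le 1$ for each good $j$, and objective $\sum_{i,S} v_i(S) x_{i,S}$. Although this LP has exponentially many variables, its dual separation problem is exactly a demand query, which by Proposition~\ref{p:demand1} is polynomial-time for hypergraph valuations. Hence, via the standard ellipsoid-on-the-dual reduction (see e.g.~\cite{BN07}), the LP can be solved to optimality in polynomial time.

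\textbf{Step 2 (Proxy bidders).} Following Feige~\cite{F06} and~\cite{DFK10}, I would replace each original bidder $i$ by a collection of proxy bidders indexed by a bucket structure tailored to hypergraph-$r$ valuations: I would bucket the weights $w_{ij}$ and $w_{ie}$ into $O(\log m)$ geometric weight classes, and within each class further index by the edge size $k\in\{1,\dots,r\}$. Each proxy bidder then has a valuation whose nonzero value comes from a single bucket, so it is essentially a scaled, ``uniform'' valuation. The proxy-bidder LP loses only a constant factor versus the original, and its solution admits a rounding tailored to each bucket.

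\textbf{Step 3 (Lavi--Swamy rounding with $\alpha=O(\log^r m)$).} I would apply the Lavi--Swamy decomposition~\cite{LS05} to express the optimal fractional solution $x^*$ as a convex combination of integer allocations scaled by $1/\alpha$; this reduces to exhibiting a randomized rounding procedure that, for each proxy bidder individually, delivers in expectation at least a $1/\alpha$ fraction of its fractional LP value. Within each of the $r$ edge-size buckets, I would use a correlated rounding (random bidder / random threshold in the spirit of Figure~\ref{f:wm}, iterated across $O(\log m)$ weight classes) that loses an $O(\log m)$ factor per bucket; stacking the $r$ independent rounding phases multiplies these losses, yielding $\alpha=O(\log^r m)$.

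\textbf{Step 4 (MIDR and payments).} The distributional range consists of all convex combinations produced by this fixed decomposition procedure applied to feasible fractional solutions; because the procedure is committed to before bids are received and the decomposition maximizes expected welfare over this range, the rule is MIDR. Appending VCG-style payments as in~\cite{DD09} yields a truthful-in-expectation mechanism. Every step (solving the LP, the Lavi--Swamy decomposition, sampling, and the payment computation) is polynomial.

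The main obstacle is Step~3: proving the $O(\log^r m)$ integrality-gap bound per proxy bidder. Independent rounding fails badly on large hyperedges because a product of $r$ marginals can be tiny, so a correlated scheme is unavoidable; the delicate point is designing a fixed (bid-oblivious) rounding recipe that interacts correctly with the proxy bucketing so that (i) each proxy bidder's contribution is preserved up to $O(\log^r m)$, (ii) the output remains a feasible integer allocation of the original goods, and (iii) the distribution over allocations is determined solely by $x^*$, so that the pre-committed distributional range is well-defined and the MIDR property survives.
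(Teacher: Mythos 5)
Your high-level blueprint (configuration LP via demand oracles, proxy valuations, Lavi--Swamy, MIDR + VCG payments) matches the paper's, but the step you yourself flag as ``the main obstacle'' is exactly the step the paper's construction is designed to avoid, and as written your plan does not close it. You work with the single-copy configuration LP ($\sum_i\sum_{S\ni j}x_{i,S}\le 1$) and ask for a Lavi--Swamy decomposition with $\alpha=O(\log^r m)$, which requires an integrality-gap verifier for that LP that works for the objectives arising inside the ellipsoid-based decomposition; the paper's Remark~\ref{rem:wm} points out precisely why this route is problematic (the known rounding for hypergraph-$r$ valuations gives no guarantee once negative edge weights appear in the decomposition subproblems), and your bucketed ``proxy bidders'' with an $O(\log m)$-per-bucket correlated rounding are only sketched, not proved. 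In addition, your MIDR argument in Step~4 is too loose: pre-committing the procedure and maximizing the \emph{proxy} LP objective does not by itself imply the rule maximizes \emph{expected true welfare} over a fixed range; you would need an exact identity tying the proxy objective of the chosen integral solution to the expected welfare of the final (randomly rounded) allocation, which you acknowledge in point (iii) but never establish.

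The paper sidesteps all of this with a different LP and a matched rounding. It relaxes the supply constraints to allow each good to be fractionally allocated up to $B=\Omega(\log m)$ times (constraints \eqref{eq:lp7}--\eqref{eq:lp8}), so the integrality gap is $m^{1/(B+1)}$, a universal constant, and Lavi--Swamy applies with constant $\alpha$ for general monotone valuations via demand oracles. The $O(\log^r m)$ loss is then pushed into the proxy valuations \eqref{eq:mod}, $v_i'(S)=\sum_{j\in S}w_{ij}/B+\sum_{e\subseteq S}w_{ie}/B^{|e|}$, which are defined to equal the expected true value when each good won in the sampled integral solution is retained independently with probability $1/B$ (the conflict-resolution step of Figure~\ref{f:ca}). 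This yields the exact identity $\ex[v(\mathbf{x}(\mathbf{z}))]=f(\mathbf{z})$ (equation \eqref{eq:p1}), which is what makes the composed rule MIDR, and the approximation factor is $\alpha B^r=O(\log^r m)$ via $v'(\mathbf{x})\ge v(\mathbf{x})/B^r$. So the key missing idea in your proposal is the $B$-fold relaxation of the supply constraints together with proxy valuations calibrated exactly to the independent keep-with-probability-$1/B$ rounding; without it, your Step~3 is an open claim rather than a proof.
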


We now describe our new allocation rule.  Like the welfare
maximization algorithm in Section~\ref{s:wm}, it randomly rounds a
linear programming relaxation.
To obtain both truthfulness-in-expectation and a good approximation, we make two non-trivial changes.
First, our linear program fractionally allocates each good  at most $B$ times, where $B=\Omega(\log m)$.
On one hand, this duplication of the items results in a linear program with  a constant integrality gap.
On the other hand, rounding the linear program involves more ``conflict resolution''.  We define a natural rounding procedure that degrades the objective value by a factor of at most $O(B^r)$. The objective function of our linear program will involve modified valuation profiles, first proposed by Feige~\cite{F06} and termed \emph{proxy valuations} by
Dobzinski, Fu, and Kleinberg~\cite{DFK10}. 
The proxy  valuations will take into account the rounding procedure, and guarantee that the allocation rule  that first solves the linear program and then rounds the fractional solution is MIDR.

Our high-level algorithm is described in Figure~\ref{f:ca}.  In more
detail, the first step is to scale the edge weights of the reported
valuations $v_1,\ldots,v_n$ to obtain the corresponding proxy
valuations $v'_1,\ldots,v'_n$, as follows,
\begin{equation}\label{eq:mod}
v_i'(S) = \sum_{v \in S} \frac{w_{iv}}{ B} + \sum_{e \,:\, e \sse S} \frac{w_{ie}} {B^{|e|}}.
\end{equation}
We call~$v'_i$ the {\em proxy valuation} corresponding to~$v_i$.
Note that the proxy valuation of any set is exactly the expected value of the set when each item is assigned to the bidder independently with probability $1/B$.
The next step formulates the following linear program
\begin{eqnarray}
\label{eq:lp6}
\mbox{max} & f({\mathbf y}) = \sum_{i,S \ne \emptyset} v'_i(S) y_{i,S}&\\
\label{eq:lp7}
\mbox{subject to:} & \sum_{S\ne \emptyset} y_{i,S} \le 1 &\mbox{for every  player~$i$.}\\
\label{eq:lp8}
& \sum_i \sum_{S \mid j \in S} y_{i,S} \le B & \mbox{for every good~$j \in M$.}\\
\label{eq:lp9}
& y_{i,S} \ge 0 & \mbox{for every player~$i$ and bundle ~$S\sse M$.}
\end{eqnarray}
While the linear program has exponential size, it can be solved in polynomial time (by dualizing and using the ellipsoid algorithm)
given access to a demand oracle \cite{BN05iter}. By Proposition~\ref{p:demand1}, such a demand oracle can be computed in polynomial-time for hypergraph-$r$ valuations.
The next step is to use the Lavi-Swamy~\cite{LS05} approach to compute, in polynomial time, a distribution over integral solutions with polynomial size support with the following property. The expectation of  this distribution over integer solutions is equal to the optimal fractional solution,  divided by a universal constant $\alpha$. Any choice of $\alpha$ greater than the integrality gap of the linear program suffices; The integrality gap is $m^{\frac{1}{B+1}}$, which is bounded by a universal constant for any choice of $B=\Omega(\log m)$ (c.f.   \cite{LS05}).
In the fourth step we choose one of the integral solutions according to the prescribed distribution. Finally, in the last stage we resolve conflicts in the integral solution in such a way that if a player received a copy of the good in the integral solution then he is allocated that good with probability $1/B$, independently over all goods.

From the above description and Proposition~\ref{p:demand1} is it clear that the mechanism  runs in polynomial time. Next we prove the approximation guarantee and show that  the mechanism is MIDR.

\begin{figure}[t]
\hrule\medskip
\textbf{Parameters}: $B= \Omega(\log m)$. Universal constant $\alpha$ exceeding integrality gap of linear program~\eqref{eq:lp6}--\eqref{eq:lp9}.\\
\textbf{Input}: reported hypergraph-$r$ valuations
$v_1=(M,E_1,w_1),\ldots,v_n=(M,E_n,w_n)$.
\begin{enumerate}

\item Form proxy valuations $v'_i$ from $v_i$ for every player $i$. Let $v'=(v'_1,v'_2,\ldots,v'_n)$.

\item
Solve the linear program~\eqref{eq:lp6}--\eqref{eq:lp9} with respect to $v'$, obtaining
optimal solution~${\mathbf y^*}=\{ y^*_{iS} \}_{1 \le i \le n, S \sse M}$.

\item
Compute a decomposition~$\frac{{\bf y^*}}{\alpha} = \sum_{\ell \in \L}
\lambda_l {\bf z_{\ell}}$ into a convex combination of a polynomial
number of feasible integral solutions.

\item
Sample a feasible integral solution~${\mathbf z}=\{ z^*_{iS} \}_{1 \le i \le n, S \sse M}$ according to the
distribution $\{ \lambda_{\ell} \}_{\ell \in \L}$.

\item Independently for each good~$j=1,2,\ldots,m$: let $I_j=\{i \mid z_{i,S}=1, j \in S\}$, then with probability $\frac{|I_j|}{B}$ allocate $j$ to a uniformly random player in $I_j$, otherwise do not allocate $j$. 
    Note that  $|I_j|=\sum_i \sum_{S \mid j \in S} z_{i,S} \le B$ for every good $j$, thus $\frac{|I_j|}{B}\leq 1$.


\end{enumerate}
\caption{\textsf{The MIDR randomized rounding allocation rule for the linear
    program~\eqref{eq:lp6}--\eqref{eq:lp9}.}\label{f:ca}}
\medskip\hrule
\end{figure}

\begin{proposition}
The mechanism is MIDR and the approximation guarantee is at most  $\alpha B^r$
\end{proposition}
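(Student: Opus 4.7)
The plan is to verify the two claims separately, anchored by one key identity: for any integral feasible solution $\mathbf{z}$ of the linear program \eqref{eq:lp6}--\eqref{eq:lp9}, the expected welfare under the \emph{true} valuations $v$ of the allocation produced by step~5 on input $\mathbf{z}$ equals $\sum_{i,S} z_{i,S}\, v'_i(S)$. I would prove this first: given $\mathbf{z}$, a player $i$ with $z_{i,S}=1$ receives each $j \in S$ with probability exactly $(|I_j|/B)\cdot(1/|I_j|) = 1/B$, and independently across goods (step~5 flips an independent coin per good). Expanding $v_i$ and comparing to the definition \eqref{eq:mod} of $v'_i$, the expected value of $i$'s random bundle is exactly $v'_i(S)$. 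Summing over $i$ gives the identity.

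For the MIDR property, the plan is to define the distributional range, independently of the reports, as follows. For each feasible fractional solution $\mathbf{y}$ of \eqref{eq:lp6}--\eqref{eq:lp9}, include the distribution over final allocations obtained by (a) decomposing $\mathbf{y}/\alpha = \sum_\ell \lambda_\ell \mathbf{z}^{(\ell)}$ into a convex combination of integer feasible solutions (possible since $\alpha$ exceeds the LP's integrality gap, by the Lavi--Swamy machinery~\cite{LS05}), (b) sampling $\mathbf{z}^{(\ell)}$ with probability $\lambda_\ell$, and (c) applying step~5. By the identity above and linearity, the expected welfare under any valuation profile $v$ of the distribution labeled by $\mathbf{y}$ equals $f(\mathbf{y})/\alpha$, which depends only on $\mathbf{y}$ and not on the specific decomposition. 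Hence maximizing expected welfare over the range is equivalent to maximizing $f$ over LP-feasible $\mathbf{y}$, which is precisely what step~2 of the mechanism does; the subsequent steps merely realize the corresponding range element. This establishes that the allocation rule is MIDR.

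For the approximation guarantee, I would compare the proxy to the true valuation term by term. Because $B \ge 1$ and every edge satisfies $|e| \le r$, each coefficient $1/B^{|e|}$ appearing in $v'_i(S)$ is at least $1/B^r$, and similarly $1/B \ge 1/B^r$, so $v'_i(S) \ge v_i(S)/B^r$ for every $i$ and $S$. Now let $(S_1^*,\dots,S_n^*)$ be an optimal integer allocation with welfare $\mathrm{OPT}$, and let $\mathbf{y}^{\mathrm{OPT}}$ be its characteristic vector ($y^{\mathrm{OPT}}_{i,S_i^*}=1$, zero otherwise), which is LP-feasible. Then $f(\mathbf{y}^{\mathrm{OPT}}) = \sum_i v'_i(S_i^*) \ge \mathrm{OPT}/B^r$, and by optimality of $\mathbf{y}^*$ we have $f(\mathbf{y}^*) \ge \mathrm{OPT}/B^r$. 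Combining with the identity, the mechanism's expected welfare under $v$ equals $f(\mathbf{y}^*)/\alpha \ge \mathrm{OPT}/(\alpha B^r)$, as required.

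The main subtlety I anticipate is confirming that the range is well-defined despite the decomposition algorithm's output potentially depending on $\mathbf{y}^*$: the point is that the expected welfare of the resulting distribution is a function of $\mathbf{y}^*$ alone, so any tie-breaking in how the decomposition is computed preserves MIDR. A minor secondary check is that the bound $|I_j|\le B$ in step~5 (needed for probabilities to be well-defined) is ensured by constraint \eqref{eq:lp8} on $\mathbf{y}^*$ and hence on every integer $\mathbf{z}^{(\ell)}$ in the decomposition.
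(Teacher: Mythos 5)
Your proposal is correct and follows essentially the same route as the paper's proof: the key identity that step~5 turns any integral LP solution $\mathbf{z}$ into an allocation whose expected true welfare is $\sum_{i,S} v'_i(S) z_{i,S}$ (each good reaching its player independently with probability $1/B$), hence expected welfare $f(\mathbf{y}^*)/\alpha$ and the MIDR property, plus the bound $v'_i(S)\ge v_i(S)/B^r$ applied to the optimal integral allocation. Your extra remarks on the range being well-defined under decomposition tie-breaking and on $|I_j|\le B$ are consistent with (and slightly more explicit than) the paper's argument, but not a different approach.
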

\begin{proof}
First, we prove that the mechanism is MIDR. Let $M(v)$ denote the expected social welfare achieved by the mechanism on receiving reports $v$.
Let ${\mathbf x}({\mathbf z})$ be the integral allocation of the mechanism (at the end of stage (5)) given a feasible integral solution ${\mathbf z}$ from stage 4.
Then from linearity of expectation and the fact that per player, the probability of getting allocated each good in step (5) is independent we have
\begin{equation}\label{eq:p1}
\expect{v({\mathbf x}({\mathbf z}))} = \sum_{i,S \ne \emptyset} v'_i(S) z_{i,S} = f({\mathbf z}).
\end{equation}
for any ${\mathbf z}$. By definition of the decomposition of stage 3, and using linearity of $f(.)$,  we have
\begin{equation}\label{eq:p2}
\expect [\{ \lambda_{\ell} \}_{\ell \in \L}] {f({\mathbf z}_\ell)} = f(\mathbf y^*)/\alpha.
\end{equation}
Thus, from equalities (\ref{eq:p1}),(\ref{eq:p2}) we conclude that
\begin{equation}\label{eq:p3}
M(v) = \expect [\{ \lambda_{\ell} \}_{\ell \in \L}] {\expect{v({\mathbf x}({\mathbf z_l}))}} = f(\mathbf y^*)/\alpha
\end{equation}

Since the linear program maximizes $f$ over its feasible domain, equation~(\ref{eq:p3}) implies that the mechanism optimizes $M$ over the set of distributions in its range.  Thus, the mechanism is MIDR.

We now prove the approximation guarantee. Let $x^*$ denote the vector encoding the  welfare-maximizing allocation for valuation profile $v$, let $v(x^*)$ denote the optimum welfare, and $v'(x^*)$ denote the corresponding ``proxy'' welfare. Observe that $f(y^*) \geq f(x^*)$ by optimality of $y^*$ for the linear program.  Since $x^*$ is a feasible integral allocation, $f(\mathbf x^*) =v'(\mathbf x^*)$. By equation (\ref{eq:mod}) we have that
\begin{equation}\label{eq:p4}
v'({\mathbf x}) \ge \frac{v({\mathbf x})}{B^r}
\end{equation}
for any integral allocation ${\mathbf x}$,  in particular, for ${\mathbf x^*}$. Combining with (\ref{eq:p3}),(\ref{eq:p4}) it follows that
\begin{equation}\label{eq:p5}
M(v) =  \frac{f(\mathbf y^*)}{\alpha} \ge \frac{v'(\mathbf x^*)}{\alpha} \ge \frac{v({\mathbf x^*})}{\alpha B^r}
\end{equation}
\end{proof}

\bibliographystyle{plain}
\bibliography{bib}

%
%
%
%
%
%
%

\ShortLong{
}{
\appendix
\section{When the Square of the Line Graph has a Small Chromatic Number}
\label{app:chromatic}
In this section we present a  truthful mechanism with approximation that equals to
the chromatic number of the square of the line graph of the common graph $G$ (when all valuations are subgraphs of $G$).
For example, if $G$ has bounded degree $d$ then the square of its line graph has chromatic number $O(d^2)$.

We begin with some standard definitions.
Given a graph $G=(V,E)$ let $L(G)=(E,\{(u,v),(w,x) \in E \mid | \{u,v\} \cap \{w,x\} \ne \emptyset\})$ be the \emph{line graph} of $G$ whose vertices are the edges of $G$ and nodes in $L(G)$ share an edge in $L(G)$ if the two corresponding edges in $G$ intersect at some vertex of $G$. Given a graph $G=(V,E)$ let $G^2=(V,\{(u,v) \mid \exists w, (u,w),(w,v) \in E \vee (u,v) \in E \})$ be the \emph{square graph} whose edges are all length one and two paths in $G$.

We are now ready to present our result.
\begin{proposition}
Let $G$ by any graph such that the valuations of all players are subgraphs of $G$.
Then there exists a truthful mechanism that runs in polynomial time and is a $\chi(L(G)^2)$-approximation.
\end{proposition}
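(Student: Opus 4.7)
The plan is to build a maximal-in-range (MIR) mechanism whose range depends only on $G$ and a given strong edge coloring of $G$, so that VCG payments yield truthfulness and the approximation ratio follows from a counting argument over color classes. Concretely, I would partition $E(G)$ into $k = \chi(L(G)^2)$ color classes $C_1,\ldots,C_k$, each of which is, by definition of a proper coloring of $L(G)^2$, an \emph{induced matching}: the edges of $C_c$ are vertex-disjoint, and $G[V(C_c)] = C_c$. Let $V_{\mathrm{iso}}$ be the set of goods that are isolated in $G$. For each color $c$, let $\R_c$ be the set of allocations that assign only goods lying in $V(C_c) \cup V_{\mathrm{iso}}$ (every other good is unallocated), and set $\R = \bigcup_c \R_c$. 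Since this range depends only on $G$ and the coloring, the rule ``output an allocation in $\arg\max_{A \in \R} v(A)$, combined with VCG payments'' is MIR and hence truthful.

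Next I would verify polynomial runtime. For a fixed color $c$, the induced-matching property guarantees that the only edges of $G$ contained inside $V(C_c) \cup V_{\mathrm{iso}}$ are the matched pairs of $C_c$ themselves (no cross-pair chords exist in $V(C_c)$, and $V_{\mathrm{iso}}$ contains no edges at all). Consequently $\max_{A \in \R_c} v(A)$ decomposes: each good in $V_{\mathrm{iso}}$ is given to the player with the largest weight on it, and for each pair $(u,v) \in C_c$ one enumerates the $O(n^2)$ assignments of $\{u,v\}$ to the players and picks the welfare-maximizing option (accounting for the bonus $w_{i,(u,v)}$ when a single player $i$ receives both $u$ and $v$ and $(u,v) \in E_i$). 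Iterating across the $k$ colors, plus the standard $O(n)$ re-invocations for VCG payments, runs in polynomial time.

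The approximation bound comes from a counting argument over colors. Fix an optimal allocation $A^{\mathrm{opt}}$ with vertex-welfare $V$ and edge-welfare $E$, so $v(A^{\mathrm{opt}}) = V + E$. For each color $c$, let $A^c \in \R_c$ be the restriction of $A^{\mathrm{opt}}$ that unallocates every good outside $V(C_c) \cup V_{\mathrm{iso}}$; by the decomposition above, $v(A^c)$ equals the vertex-welfare $A^{\mathrm{opt}}$ collects on $V(C_c) \cup V_{\mathrm{iso}}$ plus the edge-welfare it collects on $C_c$. Summing over $c$: each edge of $A^{\mathrm{opt}}$ lies in exactly one color class (total $E$); each good in $V_{\mathrm{iso}}$ appears in every $V(C_c) \cup V_{\mathrm{iso}}$, counted $k \ge 1$ times; and each good $j$ with $\deg_G(j) \ge 1$ lies in $V(C_c)$ for exactly $\deg_G(j) \ge 1$ colors, because a strong edge coloring assigns pairwise distinct colors to the edges incident to $j$. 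Hence $\sum_{c=1}^k v(A^c) \ge V + E = v(A^{\mathrm{opt}})$, so $\max_{A \in \R} v(A) \ge v(A^{\mathrm{opt}})/k$, proving the $\chi(L(G)^2)$-approximation.

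The main thing to get right is the twin use of the induced-matching property of $C_c$: it is what makes $\max_{A \in \R_c} v(A)$ decompose into $O(n^2)$-sized local problems, and it is also what guarantees that every non-isolated good is covered at least once in the summation (via distinct colors on the edges incident to any vertex). A minor caveat is that the mechanism needs a strong edge coloring of $G$ with $k$ colors as part of its input, since computing $\chi(L(G)^2)$ exactly is NP-hard in general; any strong edge coloring of $G$ with $k'$ colors plugs into the identical construction and yields a $k'$-approximation.
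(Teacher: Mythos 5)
Your proposal is correct and follows essentially the same route as the paper's (very terse) proof: a strong edge coloring of $G$, i.e.\ a proper coloring of $L(G)^2$, whose color classes are induced matchings so that the restricted welfare maximization decomposes edge by edge, combined into a maximal-in-range rule with VCG payments and a counting argument over color classes. Your write-up is in fact more careful than the paper's sketch, notably in adding the isolated goods $V_{\mathrm{iso}}$ to every range and in making the MIR and averaging steps explicit.
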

\begin{proof}
One can color edges of the square graph with $\chi(L(G)^2)$ colors. So any two edges with the same color are at distance at least 2. We can iterate over all colors and for each color allocate only the items that are induced by the edges of that color. Since edges are at distance at least 2, we can compute the optimal allocation for each such edge independently.
\end{proof}

One simple corollary of the proposition is that if the common graph $G$ has bounded degree $d$ then
we can design a truthful mechanism with $O(d^2)$-approximation.
}
\ShortLong{
}{
}
\end{document}